\theoremstyle{definition}
\def\bbl@set@language#1{%
  \edef\languagename{%
    \ifnum\escapechar=\expandafter`\string#1\@empty
    \else\string#1\@empty\fi}%
  \@ifundefined{babel@language@alias@\languagename}{}{%
    \edef\languagename{\@nameuse{babel@language@alias@\languagename}}%
  }%
  \select@language{\languagename}%
  \expandafter\ifx\csname date\languagename\endcsname\relax\else
    \if@filesw
      \protected@write\@auxout{}{\string\select@language{\languagename}}%
      \bbl@for\bbl@tempa\BabelContentsFiles{%
        \addtocontents{\bbl@tempa}{\xstring\select@language{\languagename}}}%
      \bbl@usehooks{write}{}%
    \fi
  \fi}
\newcommand{\DeclareLanguageAlias}[2]{%
  \global\@namedef{babel@language@alias@#1}{#2}%
}
\newcommand{\citen}[1]{Ref.~\citenum{#1}}
\newtheorem{definition}{Definition}[section]
\newcommand{\idmat}[0]{\mathbb{I}}
\newcommand{\bigo}[1]{\mathcal{O}(#1)}
\newcommand{\bigot}[1]{\tilde{\mathcal{O}}( #1)}
\newcommand{\bigt}[1]{\Theta(#1)}
\newcommand{\bigtt}[1]{\tilde{\Theta}(#1)}
\newcommand{\bigomega}[1]{\Omega(#1)}
\newcommand{\defeq}[0]{: =}
\newcommand{\Google}{\affiliation{Google Quantum AI, Venice, CA, USA}}
\begin{document}

\title{Accelerating Quantum Algorithms with\newline Precomputation}

\author{William J. Huggins}
\email{whuggins@google.com}
\Google

\author{Jarrod R. McClean}
\Google

\date{2024-02-15}

\begin{abstract}
    Real-world applications of computing can be extremely time-sensitive.
    It would be valuable if we could accelerate such tasks by performing some of the work ahead of time.
    Motivated by this, we propose a cost model for quantum algorithms that allows quantum precomputation; i.e., for a polynomial amount of ``free'' computation before the input to an algorithm is fully specified, and methods for taking advantage of it.
    We analyze two families of unitaries that are asymptotically more efficient to implement in this cost model than in the standard one.
    The first example of quantum precomputation, based on density matrix exponentiation, could offer an exponential advantage under certain conditions.
    The second example uses a variant of gate teleportation to achieve a quadratic advantage when compared with implementing the unitaries directly.
    These examples hint that quantum precomputation may offer a new arena in which to seek quantum advantage.
\end{abstract}

\maketitle

\section{Introduction}

In order to efficiently use limited computational resources, it is natural to quantify and minimize their use.
In quantum computing, we frequently try to minimize some proxy for the spacetime cost of an algorithm, such as the number of two-qubit gates on an near-term machine or the number of non-Clifford gates on a fault-tolerant device.
Focusing on spacetime metrics allows one to easily incorporate the fungibility of additional qubits and time inside error correcting codes~\cite{Fowler2012-ti,Gidney2019-qi,Litinski2019-nu}, as well as elements of algorithmic parallelism.
However, in some cases, one is interested in the raw time to solution, or ``wall-clock time,'' given any reasonable resources.
As such, in this paper, we explore a different cost model that allows for what we call ``quantum precomputation.''
In the process, we aim to understand the opportunities and challenges inherent in generalizing classical ideas of precomputation, e.g., caching of results, indexing in databases, or creating lookup tables.
The precomputation cost model allows for a quantum algorithm to start with access to a specially prepared resource state that depends on the algorithm and some portion (but not all) of its input.
We neglect the cost of preparing this resource state, but we demand that it can be prepared efficiently, i.e., that the quantum and classical resources required scale polynomially in the size of the input.

Our precomputation cost model is motivated by real-world problems where the crucial limited resource is the computational power available after the problem is fully specified.
For some of these problems, the value of finding a solution as quickly as possible would justify investing extra effort ahead of time preparing to perform a computation.
In fields ranging from optimization, to finance, to data analysis, there are tasks that naturally fit into this framework.
If we can build useful quantum primitives that accelerate such tasks in the precomputation cost model, it could have a substantial impact even in cases where the overall quantum advantage is modest or non-existent.
We study quantum precomputation because of these potential practical applications, and also because it offers the chance to investigate the nature of quantum computation from another angle.
Notably, the no-cloning theorem imposes limitations on our ability to reuse the results of earlier quantum computations, which implies that precomputation may occupy a different role in quantum computing than it does classically.

In order for the precomputation cost model to make sense, there must be some components of the computational task that are naturally specified before others.
For example, we could be given a classical description of a Hamiltonian now with the understanding that we will want to estimate some properties of its ground state that will be determined at a later time.
In such a situation, we could prepare for when these properties are specified by generating and storing a sufficient number of copies of the ground state.
In other cases, we might have a classical description of some unitary \(U\) available now that we will later wish to apply to a (currently unknown) state \(\ket{\psi}\).
In this paper, we ask if we can find interesting or useful families of tasks that can be implemented using asymptotically fewer quantum resources in a cost model that allows for free precomputation.

We formalize our definition of the precomputation cost model in \Cref{sec:precomputation}.
In \Cref{sec:prior_work}, we discuss some of the connections that quantum precomputation has with prior work on quantum and classical computation.
We go on to explore how existing algorithmic primitives can interpreted as tools for quantum precomputation in \Cref{sec:precomputation_examples}.
Specifically, we make use of density matrix exponentiation and gate teleportation to accelerate the application of certain unitaries in the precomputation cost model~\cite{Lloyd2014-td,Gottesman1999-gr}, finding the possibility of speedups that range from quadratic to exponential (when comparing the cost in the precomputation model with the usual quantum gate complexity).
In \Cref{sec:selective_gate_teleportation_precomputation}, we present a less straightforward protocol for quantum precomputation that uses a technique known as selective teleportation~\cite{Fowler2012-ti} to yield a quadratic improvement in complexity for a family of diagonal unitaries.
We conclude with a discussion of open questions and potential applications in \Cref{sec:discussion}.

\section{The Precomputation Cost Model}
\label{sec:precomputation}

\subsection{Formalizing the cost model}

Analyzing the resources required to execute an algorithm requires a cost model.
A good cost model encodes useful assumptions that simplify the analysis, abstracting away irrelevant details while keeping the essential information required to answer the questions at hand.
There are a number of different choices one could make in formalizing the intuition behind quantum precomputation into a cost model; i.e., specifying what it means to ``allow a reasonable amount of work to be performed for free.''
In this section, we propose a concrete definition flexible enough to encompass several interesting examples rather than a maximally general abstract definition.

There are many kinds of computational tasks that we might wish to analyze in the precomputation cost model.
We will loosely formalize a computational task as an algorithm, which we treat as a map that takes an input from some set of valid inputs and returns a correct output (or a sample from a correct distribution over outputs).
Different algorithms may define different notions of valid inputs and correct outputs. 
For now, we leave these details unspecified, although they may be crucial to determining the complexity of implementing an algorithm.
For example, there are some tomographic tasks that are efficient for pure state inputs but prohibitively expensive for general mixed state inputs~\cite{Gilyen2022-gu}.
In other cases, the computational complexity of a problem may vary depending on the definition of the ``correct'' output, e.g., what kind of approximation is allowed~\cite{Gharibian2022-qy}.

To be sufficiently general, we need a notion of a quantum algorithm that can accept both quantum and classical input and can output both quantum and classical data.\footnote{We could consider all of the inputs and outputs to be quantum states, but treating them separately will help us take a more nuanced approach that differentiates the quantum and classical resources.}
We also need to allow for the possibility that the input is partitioned into two components that are provided at different times.
For simplicity, we assume that the earlier input (that might be used in the precomputation step) is classical, and that the later input may be a combination of classical and quantum data.
Let \(x\) denote the (classical) input provided at the earlier time and let \(\rho\) and \(y\) denote the quantum and classical components of the input provided at a later time.
For the quantum and classical outputs we use the symbols \(\sigma\) and \(z\) respectively.

In the usual situation, where we do not take advantage of the fact that some portion of the input may be available ahead of time, a quantum algorithm \(\mathcal{A}\) implements a map 
\begin{equation}
    \label{eq:standard_computation_map}
    \mathcal{A} : x, y, \rho \rightarrow  z, \sigma.
\end{equation}
In general, we can understand \(\mathcal{A}\) as performing some classical computation that takes \(x\) and \(y\) as an input, determining a quantum circuit that is subsequently applied to \(\rho\).
The portions of the resulting quantum state that are not measured or discarded constitute \(\sigma\).
The classical component of the output, \(z\), is classically computed from \(x, y\), and the measurement outcomes.
In a standard cost model, we are concerned with the cost of executing the algorithm \(\mathcal{A}\) given access to \(x\), \(y\), and \(\rho\).

In a model that allows for free precomputation, we aim to produce the same (distribution over) outputs by implementing the map 
\begin{equation}
    \label{eq:precomputation_map}
\mathcal{P}: \bar{x}(\mathcal{A}, x), \ket{\Gamma(\mathcal{A}, x)}, y, \rho \rightarrow z, \sigma,
\end{equation}
where \(\bar{x}(\mathcal{A}, x)\) and \(\ket{\Gamma(\mathcal{A}, x)}\) represent the classical and quantum outputs of some precomputation step.
We allow for \(\bar{x}(\mathcal{A}, x)\) and \(\ket{\Gamma(\mathcal{A}, x)}\) to be generated using a ``reasonable'' amount of classical and quantum computation performed ahead of time, i.e., with knowledge of \(\mathcal{A}\) and \(x\) but not \(\rho\) or \(y\).
In a precomputation cost model, the only cost that we consider directly is the cost of performing the map \(\mathcal{P}: \bar{x}(\mathcal{A}, x), \ket{\Gamma(\mathcal{A}, x)}, y, \rho \rightarrow z, \sigma\).
In order to fully define a precomputation cost model and compare it to a standard cost model, we therefore have to specify answers to two questions: 
i) How will we quantify the costs of implementing \(\mathcal{A}\) and \(\mathcal{P}\)?
ii) What do we mean when we say that we allow for a ``reasonable'' amount of classical and quantum computation to be used in the preparation of \(\ket{\Gamma(\mathcal{A}, x)}\) and \(\bar{x}(\mathcal{A}, x)\)?

In this paper, we focus on quantifying the quantum resources used to implement \(\mathcal{P}\) (and \(\mathcal{A}\) itself) in terms of the quantum circuit complexity (a term that we use interchangeably with ``gate complexity''), the number of gates from some elementary set of discrete operations required to implement the algorithm.
We consider a discrete set of gates that consists of one- and two-qubit Clifford gates, single-qubit computational basis measurement operations, and \(T\) gates. 
We also choose to count single-qubit identity operations as gates in order to include the cost of storage (which is comparable in most architectures to the cost of active workspace).
This choice implies that our notion of circuit complexity grows asymptotically as fast as the product of the number of qubits and the circuit depth (the number of layers of gates, executed in parallel).

We could define other related models that allow for free precomputation but account for ``cost'' differently.
Depending on the context, it might be useful to work in an oracle model, or to count only the number of non-Clifford gates, or even to quantify the space-time volume used in a particular error-correcting code.
It might also be useful to discuss the number of gates required for the best known implementation of an algorithm, rather than the absolute minimum required.
For the examples we consider, this distinction will not be important.
We find that discussing the gate complexity is convenient because it allows us to use the same model to consider several different examples, but we will make some comments along the way regarding other notions of cost.
As we consider these examples, it will sometimes make sense to allow for \(\mathcal{A}\) or \(\mathcal{P}\) to be implemented with some error.
In the context of this work, when we need to allow for some notion of error, it will be sufficient to focus on the case where the output is a quantum state and we can quantify the error using a single parameter \(\epsilon\) that bounds the trace distance between the ideal output and the actual output.

By focusing on quantifying the cost in the precomputation model in terms of the number of quantum operations, we are implicitly treating quantum operations as a fundamentally different and more limited resource than classical ones.
This decision is motivated by the practical observation that quantum operations on a fault-tolerant computer are expected to be vastly slower and more expensive than classical operations~\cite{Babbush2021-aq}.
Nevertheless, we would like a definition of the precomputation cost model that is useful in practice, so we demand that the classical time and space complexity of implementing \(\mathcal{P}\) scales as \(\mathcal{O}(\text{poly}(\epsilon^{-1}, |x|, |y|, |\rho|))\).
Here the notation \(|*|\) indicates the size of \(*\) in terms of classical or quantum bits.

Besides specifying how we quantify the cost of implementing \(\mathcal{A}\) or \(\mathcal{P}\), we also need to formalize the notion that the amount of work performed ahead of time is required to be ``reasonable.''
We should bound the quantum gate complexity of the precomputation step, as well the classical time and space complexities.
For all of these resources, we allow their usage during the precomputation step to scale as \(\bigo{\text{poly}(\epsilon^{-1}, |x|)}\).
Although we define our model with this coarse-grained notion of what is allowed during the precomputation step, we will discuss the actual scaling of the various resources in more detail for the particular examples we consider in this paper.

\section{Prior work}
\label{sec:prior_work}

While the authors are not aware of prior work that has focused on a cost model that allows for free precomputation in the sense that we consider, there are a number of closely related ideas that we draw inspiration from.
The paper that first described gate teleportation speculated that it might be used to mass manufacture resource states for later consumption~\cite{Gottesman1999-gr}.
For example, one could imagine using magic state distillation to distill a large number of magic states, storing them for use in a later computation~\cite{Bravyi2005-vi}.
Going beyond the prototypical use of magic state distillation to implement a \(T\) gate, state distillation schemes have been proposed for a variety of other few-qubit operations~\cite{Cody-Jones2012-tm, Jones2013-cf, Campbell2016-ls,Litinski2019-nu,Gidney2019-qi}.
In \citen{Cody-Jones2012-tm}, Jones et al. proposed a method that implements an arbitrary single-qubit \(Z\) rotation with success probability \(1-\delta\) by precomputing and storing a resource state on \(\bigo{-\log(\delta)}\) qubits.
More abstractly, measurement based quantum computing has some similarity to quantum precomputation, but it aims to prepare generically useful resource states rather than ones that are tailored to accelerating particular algorithms~\cite{Raussendorf2001-na, Nielsen2004-oq}.

The idea of precomputing and storing a reservoir of resource states for single or few-qubit operations is appealing, but it faces serious challenges.
In particular, the number of such resource states required for interesting and classically intractable applications appears large~\cite{Berry2019-qo,Sanders2020-lf,Gidney2021-ru}, while quantum memory has a comparable cost with active workspace in most proposed architectures~\cite{Terhal2015-dg}.
For example, \citen{Gidney2021-ru} estimates that thousands of logical qubits and billions of Toffoli and \(T\) gates would be required to factor a 2048 bit RSA integer using Shor's algorithm.
A fault-tolerant quantum computer that large enough to perform this computation, but not too much larger, would be unable to precompute and store more than a tiny fraction of the necessary resource states ahead of time.

Even so, one might ask if precomputing resource states for \(T\) or Toffoli gates offers a simple example of asymptotic advantage when the cost of the precomputation itself is neglected.
In our definition of the precomputation cost model, the answer is no.
This is because, even with access to the appropriate resource state, applying either of these gates still requires a (nonzero) constant number of operations and our model allows \(T\) gates to be performed at unit cost.
If we instead consider the task of implementing arbitrary single-qubit rotations to within some precision \(\epsilon\), \citen{Cody-Jones2012-tm} provides an example where allowing for free precomputation does indeed change the asymptotic cost. Specifically, precomputation can be used to remove the dependence on \(\epsilon\) from the cost (not including the cost of the precomputation step) at the expense of incurring some logarithmic dependence on the allowed failure probability \(\delta\).

The idea of supplementing a quantum computer with a specially-prepared resource state has also been considered from a complexity-theoretic perspective.
The complexity class BQP/qpoly formalizes the power of a polynomial-time quantum computer augmented with an arbitrary resource state, referred to as ``quantum advice,'' that is allowed to depend on the length of the input.
Comparing this complexity class to our model of quantum precomputation requires some care, so we provide a longer discussion in \Cref{app:quantum_advice} and merely summarize the conclusions here.
First of all, the model formalized in BQP/qpoly places no restrictions on the computational power used to prepare the resource state, whereas we require that it be preparable in polynomial time.
Secondly, the quantum advice states of BQP/qpoly can only depend on the length of the input.
We allow for the resource states to depend on a subset of the parameters, denoted by \(x\).
Thirdly, the only problems that fit into the framework of BQP/qpoly are decision problems, which have a classical input and a (single bit of) classical output.
This is a more limited setting than the one that we consider.\footnote{One could imagine analogues of BQP/qpoly that use a similar notion of advice but consider problems beyond the setting of decision problems. The main benefit of focusing exclusively on decision problems is that they are simple to formalize precisely.}

Despite these differences between BQP/qpoly and our notion of quantum precomputation, we can make a useful comparison if we restrict ourselves to considering the power of both models to solve decision problems.
One might suspect that our model of quantum precomputation gets some additional power from the fact that we allow the resource state to depend on the input in richer ways than allowed by the complexity class BQP/qpoly. 
However, any decision problem that is solvable in polynomial time in the precomputation model we have defined is not only a member of BQP/qpoly, but also BQP itself.
This is because we only allow a polynomial amount of ``free'' precomputation, which can't add any power to a machine that is already allowed to run arbitrary polynomial-time quantum computations.
Ultimately, our model of quantum precomputation is trying to capture a finer-grained notion of speedup than these particular complexity classes are designed to address.
Imprecisely, we could say that we are interested in the power of the ``advice that a polynomial time quantum computer can give itself.''

In the context of classical computing, the term ``precomputation'' has been used extensively to describe variations on the idea of performing useful work ahead of time and caching the result.
For example, branch-prediction is an essential component of modern computer architecture design~\cite{Smith1998-gk}.
Precomputation is used to optimize certain tasks in computer graphics~\cite{Sloan2002-zw} and computer vision~\cite{Grady2008-kf}.
The precomputation of expensive operations involved in breaking cryptographic schemes is both a practical and theoretical concern~\cite{Bernstein2013-mn}, which is closely related to the study of advice in classical computational complexity theory~\cite{Karp1980-hr}.
For the most part, these examples seem slightly different than the quantum algorithmic primitives that we will discuss.
Classically, some applications of precomputation derive their usefulness from the ability to reuse the precomputed information rather than the time-sensitive nature of the computation.
In contrast with the classical case, the resource states that we consider are generally consumed when used, precluding their reuse.
It would be interesting if other techniques, perhaps based on gentle measurements~\cite{Aaronson2019-mk}, can be used to design quantum precomputation protocols that allow for some amount of information reuse.

\section{Examples of Precomputation}
\label{sec:precomputation_examples}

In this section, we discuss several examples of quantum precomputation.
These examples show how existing quantum primitives can be leveraged to obtain an advantage in a cost model that allows for free precomputation.
In particular, we study the application of density matrix exponentiation (introduced in \citen{Lloyd2014-td}, reviewed in \Cref{app:denmat_review}) and gate teleportation (introduced in \citen{Gottesman1999-gr}, reviewed in \Cref{app:review_gate_teleportation_clifford}) as tools for quantum precomputation.

Before turning towards these examples, it is worth briefly discussing two particularly simple forms of quantum precomputation.
One natural example is the case where precomputation is equivalent to performing the first steps of some algorithm and then waiting until the problem is fully specified to perform the rest.
For example, many quantum algorithms consist of applying a known unitary to the all zero state and performing a measurement.
If we knew the unitary ahead of time but the measurement wasn't yet specified, we could perform the state preparation in advance.
More speculatively, there may be settings where it is natural to prepare for the future execution of some quantum machine learning task by encoding data into a quantum state ``on the fly'' as it streams in.
This latter idea is related to rigorous work on quantum algorithms in streaming settings, which is itself connected to the study of quantum communication complexity~\cite{Le-Gall2006-nc, Kallaugher2022-nv}.

It is easy to understand how one might be able to usefully perform precomputation by executing the steps at the beginning of some algorithm ahead of time.
We could try to imagine situations where this naturally occurs, but it is unclear if our formal definition of the notion of quantum precomputation adds anything to the understanding of such cases.
For this reason, in the other examples that we consider in this paper, we focus instead on the goal of using precomputation to accelerate steps that lie in the middle of an algorithm, rather than at the beginning.

Turning towards a second example, recall that we briefly discussed the idea of precomputing magic states to use as resources for implementing non-Clifford gates in an error correcting code in \Cref{sec:prior_work}.
We explained how there is no advantage to this idea in the primary cost model we use throughout this paper because we do not distinguish between Clifford and non-Clifford gates.
This is true, but it is instructive to consider this example in a slightly different model of quantum precomputation, where we instead quantify the amount of spacetime volume required to implement a circuit in a quantum error correcting code.
For simplicity, let us work in units where a depth $d$ circuit acting on $n$ qubits has a volume of $dn$ and let us assume that the spacetime volume required to prepare a suitably distilled $T$ state is $\lambda \gg 1$.
Furthermore, we will neglect the spacetime cost of qubits that have not yet been initialized and qubits that have already been measured (since they could presumably be used for other purposes).

Under this more nuanced cost model, we can compare the cost of implementing an algorithm with and without the precomputed $T$ states.
Let us consider a depth $d$ circuit on $n$ qubits that consumes one magic state per time step.
Implementing this algorithm without precomputation would require a spacetime volume of $nd + \lambda d$ in order to account for the cost of the circuit itself and the cost of the magic state distillation.
In the precomputation model, we allow ourselves to start with all $d$ magic states already prepared, but we must account for the cost of storing them while the algorithm executes.
We are using $d - s$ qubits to store the magic states at each step $s$ from $0$ to $d-1$, so the spacetime volume required is 
$nd + \frac{d(d+1)}{2}$.

In this cost model, precomputing the magic states removes the dependence on $\lambda$ but it increases the dependence on $d$ from linear to quadratic.
Realistic values of $\lambda$ are expected to be significantly less than $100,$ which suggests that only relatively short-depth circuits of this type would benefit from free access to precomputed magic states~\cite{Litinski2019-ek}.
This example highlights the fact that our model implicitly penalizes precomputation protocols for the space used to store their precomputed resource states.
Because of this penalization, it is not trivially true that a precomputation protocol is at least as efficient as a straightforward approach to executing an algorithm.

\subsection{Precomputation with density matrix exponentiation}
\label{sec:density_matrix_exponentiation}

In this subsection, we consider applications where reflections about an expensive to prepare state, \(\ket{b}\), are a dominant contribution to the complexity of an algorithm.
As we explain below, an algorithm that requires \(q\) calls to the reflection operator \(R = \mathbb{I} - 2 \ketbra{b}\) can be implemented by consuming \(\bigo{q^2}\) copies of \(\ket{b}\) (at nearly unit time per consumption) in lieu of making any calls to \(R\) directly.
A cost model that allows for free precomputation can therefore entirely remove the component of such an algorithm's cost that depends on \(\ket{b}\).
In the most extreme cases, this could lead to a cost in the precomputation model that is exponentially smaller than the cost in a standard model.
For example, preparing or reflecting about the state \(\ket{b}\) might require using \(\text{poly}(|x|)\) gates to implement a brute-force encoding of some classical input \(x\) into \(n=\text{polylog}(|x|)\) qubits, while the other components of the algorithm could scale polynomially in \(n\).
We consider the quantum algorithm for linear systems as a specific example of an algorithm where such a speedup might prove useful~\cite{Harrow2009-iu,Childs2017-is}.

This type of quantum precomputation makes use of a technique called density matrix exponentiation.
Introduced in \citen{Lloyd2014-td}, density matrix exponentiation allows us to consume copies of some density matrix \(\rho\) in order to approximately apply the unitary \(e^{-it\rho}\) for some time \(t\).
We provide a brief review of density matrix exponentiation in \Cref{app:denmat_review}, but for now we just recall the fact that using density matrix exponentiation to implement \(e^{-it\rho}\) to within an error \(\epsilon\) (in the diamond norm) requires
\begin{equation}
    \label{eq:denmat_exp_scaling}
    m = \bigo{t^2/\epsilon}
\end{equation}
copies of \(\rho\)~\cite{Kimmel2017-af}.

Before explaining how we can make good use of density matrix exponentiation for quantum precomputation, let us examine why it does not lead to efficient protocols for implementing general unitaries in the precomputation cost model.
Imagine that we want to implement a unitary \(U\) that corresponds to evolution under a Hamiltonian \(H\) for a time \(t\), where \(||H||\) (the spectral norm of \(H\)) and \(t\) are both \(\bigo{1}\).
We can shift \(H\) by some multiple \(c\) of the identity to obtain a positive semidefinite operator \(H + c \mathbb{I}\) with \(||H + c \mathbb{I}|| = \bigo{1}\).
Applying \(U\) using density matrix exponentiation entails evolving under the Hamiltonian corresponding to the normalized state
\begin{equation}
    \label{eq:normalized_hamiltonian}
    \rho = \frac{H + c \mathbb{I}}{\tr \left[ H + c \mathbb{I} \right]}
\end{equation}
for a time
\begin{equation}
    \tilde{t} = t \tr \left[ H + c \mathbb{I} \right].
\end{equation}
The cost of implementing \(U\) using density matrix exponentiation scales quadratically with \(\tilde{t}\), which can scale exponentially with the number of qubits in the worst case.
This occurs easily even for simple unitaries, for example, when \(H\) is a non-trivial Pauli operator.

In order for density matrix exponentiation to be a useful tool for precomputation, we need to focus on cases where the normalization factor is small.
One natural example of a unitary that is efficiently implementable using density matrix exponentiation is the reflection about a state \(\ket{b}\),
\begin{equation}
    R = \mathbb{I} - 2 \ketbra{b} = e^{-i\pi \ketbra{b}}.
\end{equation}
In order to implement \(R\) up to an accuracy \(\tilde{\epsilon}\) using density matrix exponentiation, it suffices to consume \(\bigo{\tilde{\epsilon}^{-1}}\) copies of the state \(\ketbra{b}\).
If an algorithm involves \(q\) calls to \(R\), we can guarantee a constant overall error \(\epsilon\) by setting \(\tilde{\epsilon} \propto \epsilon q^{-1} \).
We can therefore implement all \(q\) calls to \(R\) to within the desired accuracy by consuming a total of \(\bigo{\epsilon^{-1} q^2}\) copies of \(\ketbra{b}\).

As an example of a context where this kind of precomputation might be useful, consider the quantum linear systems problem~\cite{Harrow2009-iu, Childs2017-is, Lin2020-ib, Somma2021-ql, Martyn2021-mf, Costa2022-ae}.
Given a matrix \(A\) and a vector \(\vec{b}\), the linear systems problem is to find a vector \(\vec{x}\)
such that \(A \vec{x} = \vec{b}\).
The quantum formulation of this problem encodes the vector \(\vec{b}\) into the amplitudes of a state \(\ket{b}\) and asks that we prepare a state \(\ket{x} \propto A^{-1} \ket{b}\).
Without loss of generality we can assume that \(A\) is Hermitian.\footnote{One can always solve a linear systems problem on a larger space with the Hermitian
    \(
    \tilde{A} \coloneqq
    \begin{bmatrix}
        0         & A
        \\
        A^\dagger & 0
    \end{bmatrix}
    \)
    instead of the original \(A\).}
The access models for \(A\) and \(\ket{b}\) can vary, but it is usually assumed that one has access to an oracle that prepares \(\ket{b}\) and either i) the ability to perform time evolution by \(A\), ii) oracle access to the non-zero entries of (a sparse) \(A\), or iii) a block encoding of \(A\). 
Regardless of the access model for \(A\), the most efficient algorithms for this problem query the state preparation oracle for \(\ket{b}\) a number of times that scales as \(\bigot{\kappa}\), where \(\kappa\) denotes the condition number of \(A\) and the \(\bigot{\cdot}\) notation hides logarithmic factors in \(\kappa\) and the precision.
These queries are used to prepare \(\ket{b}\) and to implement the reflection \(R\) about \(\ket{b}\).

In a context where a classical description of \(\ket{b}\) is available before \(A\), preparing \(\bigot{\epsilon^{{-1}} \kappa^2}\) copies of \(\ket{b}\) during the precomputation step would allow us to apply one of the standard quantum algorithms for the linear systems problem at a cost that is independent of the cost of preparing \(\ket{b}\).
As we argued above, it is easy to imagine situations where preparing or reflecting about \(\ket{b}\) is exponentially more expensive than any other component of the algorithm.
For example, we could take \(\ket{b}\) to be a brute force encoding of some classical data \(|x|\) into \(n = \text{polylog}(|x|)\) qubits, such that preparing or reflecting about \(\ket{b}\) has a complexity that scales polynomially in \(|x|\).
We could also make the (sometimes reasonable) assumption that the condition number of \(A\) and the gate complexity of implementing \(A\) (under whatever notion of access is appropriate) scale polynomially in \(n\).
Given these two conditions, the complexity of applying any of the standard quantum algorithms for the linear systems problem would be exponentially better in the precomputation cost model than in the standard one (assuming that the target precision is a constant).

Of course, this separation is entirely due to the fact that we discount the cost of preparing the resource state.
In fact, in this form of precomputation, the cost of preparing the resource state would be asymptotically larger than the cost of implementing the reflections in the standard way since we require \(\bigot{q^2}\) copies of \(\ket{b}\) to implement the reflection \(R\) a total of \(q\) times with constant error in the overall algorithm.
Furthermore, the optimal algorithms for the quantum linear systems problem have a logarithmic dependence on the target precision~\cite{Childs2017-is}, whereas our approach introduces a polynomial dependence.
Additionally, sufficient storage for the copies of \(\ket{b}\) would be required.
Nevertheless, in a situation where \(\vec{b}\) is specified ahead of time and the solution to the problem is sufficiently valuable and time-sensitive, quantum precomputation could prove useful.
Note that there is no significant classical cost in terms of storage or computation for this form of precomputation.

It is worth point out that, if one is willing to prepare \(\bigot{\kappa^2}\) copies of \(\ket{b}\) ahead of time, there is a simpler strategy to solving the linear systems problem that does not require density matrix exponentiation.
However, this simpler strategy is less efficient with respect to the number of times that \(A\) must be queried.
Consider the original HHL algorithm of \citen{Harrow2009-iu}.
This algorithm requires starting with the state \(\ket{b}\) and time-evolving under the Hamiltonian \(A\) for a time that scales as \(\bigot{\kappa}\) (to perform phase estimation).
This is followed by a postselection step that succeeds with probability \(\bigomega{1/\kappa^2}\).
Normally one uses amplitude amplification to increase the success probability to \(\bigo{1}\).

Instead of using amplitude amplification, one could instead repeatedly prepare the appropriate state and actually perform the postselection based on the output from phase estimation.
This would solve the quantum linear systems problem with high probability using a number of copies of \(\ket{b}\) that scales as \(\bigot{\kappa^2}\).
However, it would also require a total amount of time evolution under \(A\) equal to \(\bigot{\kappa^3}\).
The approach we proposed above uses a similar number of copies of \(\ket{b}\), but the scaling in terms of \(A\) (either time evolution under \(A\) or a related notion of access) can be made nearly linear with respect to \(\kappa\) by using the optimal algorithms of, e.g., \citen{Childs2017-is}.

\subsection{Precomputing Clifford unitaries with gate teleportation}
\label{sec:gate_teleportation_precomputation}

In this subsection, we consider accelerating the task of implementing an \(n\)-qubit unitary from the Clifford group using precomputation.
We explain how a well-known construction allows for a quadratic savings in gate complexity (when comparing the cost in the precomputation model to the gate complexity in a standard cost model).
This construction is a straightforward application of gate teleportation, a technique introduced in \citen{Gottesman1999-gr} which we illustrate in \Cref{fig:generic_teleportation} and review in more detail in \Cref{app:review_gate_teleportation_clifford} (along with the definition of the Clifford group).
Although this example of quantum precomputation is particularly simple, it provides a good introduction to some of the concerns relevant in the more technically interesting example that we consider in \Cref{sec:selective_gate_teleportation_precomputation}.

\begin{figure}[htbp]
    \centering
    \begin{quantikz}[font=\footnotesize, column sep = 3.4mm]
        \lstick{\(\ket{\psi}\)} & \qw                                                                                                                                                                                    & \qw      & \qw            & \qw & \ctrl{1} \gategroup[2,steps=2,style={dashed,rounded corners,fill=red!20, inner sep=0pt}, background]{{Bell basis measurement}} & \meterD{X} & \cw & \cw \cwbend{1}
        \\
        \lstick{$\ket{0}$}        & \gate{H} \gategroup[2,steps=2,style={dashed,rounded corners,fill=blue!20, inner sep=0pt}, background,label style={label position=below,anchor=north,yshift=-0.2cm}]{{Bell state prep}} & \ctrl{1} & \qw & \qw & \targ{}                                                                                                                        & \meterD{Z} & \cw          & \cw \cwbend{1}
        \\
        \lstick{$\ket{0}$}        & \qw                                                                                                                                                                                    & \targ{}  & \gate{U}       & \qw & \qw                                                                                                                            & \qw                      & \qw          & \gate{U P^\dagger U^\dagger} & \qw \rstick{\(U \ket{\psi}\)}
    \end{quantikz}
    \caption{A quantum circuit diagram for the one-qubit version of gate teleportation~\cite{Gottesman1999-gr}. The circuit in the blue shaded area prepares a bell pair and the circuit in the red shaded area performs a bell basis measurement (the \(X/Z\) in the rounded caps indicate \(X/Z\) basis measurements). Based on the outcome of the measurement, a classically controlled operation \(U P^\dagger U^\dagger\) is performed, where the ``byproduct operator'' \(P \in \left\{ \mathbb{I}, X, Z, ZX \right\}\) depends on the measurement outcome. When \(U\) is a member of the Clifford group, \(U P^\dagger U^\dagger\) is an element of the Pauli group. Single-qubit gate teleportation generalizes naturally to a multi-qubit version. Using multi-qubit gate teleportation to apply an \(n\)-qubit unitary from the Clifford group offers a simple example of advantage in the precomputation cost model, reducing the quantum gate complexity from \(\bigo{n^2}\) to \(\bigo{n}\).}
    \label{fig:generic_teleportation}
\end{figure}
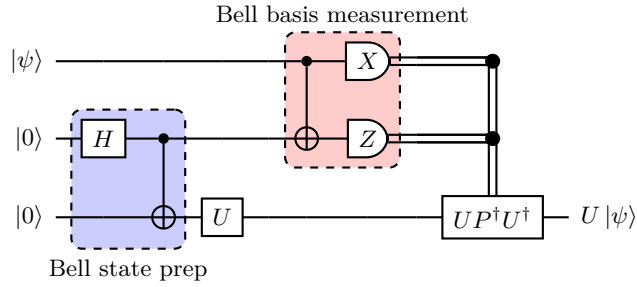

We recall that an arbitrary unitary from the \(n\)-qubit Clifford group can be efficiently implemented using one- and two-qubit Clifford gates arranged in a circuit with depth \(\bigo{n}\)~\cite{Bravyi2021-px}, leading to a gate complexity of \(\bigo{n^2}\).
A counting argument shows that this asymptotic scaling must be optimal for most elements of the Clifford group.
We will show that, in the precomputation cost model, the quantum gate complexity of applying the same unitary is only \(\bigo{n}\).

Let \(U\) be an arbitrary unitary in \(\mathcal{C}^{(2)}\) (the Clifford group on \(n\) qubits) and \(\ket{\psi}\) be an arbitrary \(n\)-qubit quantum state.
Using standard multi-qubit gate teleportation, we can prepare a state \(\ket{\Gamma(U)}\) on \(2n\) qubits that we can consume to apply \(U\) to \(\ket{\psi}\) (up to a Pauli correction).
This straightforward generalization of the procedure presented in \Cref{fig:generic_teleportation} consists of preparing \(n\) bell pairs and applying \(U\) to a set of \(n\) qubits, one taken from each bell pair.
Let us consider the steps involved in applying \(U\) to \(\ket{\psi}\) once \(\ket{\Gamma(U)}\) is already prepared.
Applying a Clifford unitary using gate teleportation involves making \(n\) simultaneous bell-basis measurements of the \(3n\)-qubit state \(\ket{\psi} \otimes \ket{\Gamma(U)}\).
The resulting \(n\)-qubit state can therefore be obtained in constant depth,
\begin{equation}
    \ket{\phi} = U P \ket{\psi},
\end{equation}
where \(P\) is the ``byproduct operator,'' a member of the Pauli group that is determined by the measurement outcomes.
By the definition of the Clifford group, the correction operator \(U P^\dagger U^\dagger\) is also a Pauli operator (up to a possible phase) and can therefore be applied in constant depth to yield the desired state \(U \ket{\psi}\).
The overall quantum circuit complexity (neglecting the cost of preparing \(\ket{\Gamma(U)}\)) is therefore \(\bigo{n}\), in contrast with the \(\bigo{n^2}\) cost of applying \(U\) without precomputation.

Although we are primarily concerned with the quantum gate complexity of applying \(U\) given \(\ket{\Gamma(U)}\), we may also wish to consider the classical computational costs of determining which of the \(4^n\) possible correction operators to apply once the measurement outcomes are known.
We need to use \(2n\) bits initially to store the results of the bell basis measurement that determines the byproduct operator.
We could store a classical description of the \(\mathcal{O}(n^2)\) Clifford gates in \(U\) and apply them to the byproduct operator.
This would require \(\mathcal{O}(n^2)\) operations (updating a constant number of the \(\bigo{n}\) stored bits each time we conjugate by a gate in the circuit) which could be performed in \(\bigo{n}\) sequential steps by parallelizing across gates in the same layer of the circuit.

We can reduce the depth of the classical computation (although not the overall number of operations) by factorizing the correction operator ahead of time,
\begin{equation}
    \label{eq:pauli_correction_operator_factorization}
    U P^\dagger U^\dagger
    =               U
    \left(
    \bigotimes_{i=1}^n X_i^{x_i} Z_i^{z_i}
    \right)
    U^\dagger
    =              
    \left(
    \prod_{i=1}^n U X_i^{x_i} U^\dagger
    \right)
    \left(
    \prod_{i=1}^n U Z_i^{z_i} U^\dagger
    \right)
    ,
\end{equation}
where the \(x_i\) and \(z_i\) are determined by the measurement outcomes of bell basis measurement.
This allows us to classically precompute each of the \(2n\) Pauli operators of the form \(U X_i U^\dagger\) or \(U Z_i U^\dagger\)
and store the results using \(\bigo{n^2}\) bits.
Once we know the measurement results, we can multiply the appropriate operators together in logarithmic depth using a divide and conquer strategy, ultimately computing the final correction operator using \(\mathcal{O}(n^2)\) operations using \(\bigo{\log(n)}\) sequential steps (neglecting the classical cost of the precomputation).

\section{Precomputing diagonal unitaries in the Clifford hierarchy with selective gate teleportation}
\label{sec:selective_gate_teleportation_precomputation}

In this section, we show how a more sophisticated form of gate teleportation introduced in~\citen{Fowler2012-ti} can be used to construct a precomputation protocol for a set of diagonal unitaries in the Clifford hierarchy (reviewed \Cref{app:review_gate_teleportation_clifford}).
We graphically illustrate this selective gate teleportation in \Cref{fig:selective_gate_teleportation} and present a more substantial review in \Cref{app:selective_teleportation}.
In \Cref{sec:gate_teleportation_precomputation}, we considered a simple example of quantum precomputation that uses standard gate teleportation to apply some \(U \in \mathcal{C}^{(2)}\) (the Clifford group).
We explained how the \(\bigo{n^2}\) gate complexity required to implement an arbitrary \(n\)-qubit Clifford unitary can be reduced to \(\bigo{n}\) in the precomputation model.
The approach is less straightforward, but the generalization that we present in this section achieves the same quadratic compression for a subset of unitaries from higher levels of the Clifford hierarchy.
In other words, we show that unitaries from the family \(\mathcal{Z}^{(k)}\), defined below, that have a gate complexity of \(\bigtt{n^k}\) when implemented directly can be implemented with a gate complexity of \(\bigot{k n^{k/2}}\) in the precomputation cost model (assuming \(k\) is even for simplicity).
The basic strategy we use is to apply such a unitary with gate teleportation and then use a series of selective gate teleportation steps to apply the correction operator up to some simpler correction that can be implemented directly.

Before we present our actual proposal, let us consider a naive generalization, where we use gate teleportation to implement some \(U \in \mathcal{C}^{(3)}\) (the third level of the Clifford hierarchy).
By definition, the correction operator required will be some \(R \in \mathcal{C}^{(2)}\).
Applying \(R\) directly would result in an overall gate complexity of \(\bigo{n^2}\), essentially saving a factor of \(n\) compared to the cost of implementing \(U\) directly, which is \(\bigomega{n^3}\) by a counting argument.
For a general \(U \in \mathcal{C}^{(k)}\), it is unclear if it is possible to obtain an advantage greater than a factor of \(n\) in the precomputation model.

However, if we restrict ourselves to considering a smaller set of unitaries, we can do better.
Rather than allowing for arbitrary elements of the Clifford hierarchy, we limit ourselves to considering elements of the hierarchy that are also diagonal.
To simplify the presentation, we actually restrict ourselves even further in this section, considering only those gates in
\(\mathcal{C}^{(k)}\) that are composed of products of \(\pm \mathbb{I}\), Pauli \(Z\) operators, and controlled \(Z\) operators with up to \(k - 1\) controls.\footnote{The only property of \(\mathcal{Z}^{(k)}\) that we leverage, other than the fact that \(\mathcal{Z}^{(k)} \subset \mathcal{C}^{(k)}\), is that it forms an Abelian group. This is also true of the full set of diagonal unitaries at each level in the Clifford hierarchy, which suggests that our results may readily generalize to this case.}
We denote this set \(\mathcal{Z}^{(k)}\) and in \Cref{app:Zk_hierarchy}, we show that it forms a group.
We also note that \(\mathcal{Z}^{(j)}\) is a proper subgroup of \(\mathcal{Z}^{(k)}\) for \(j < k\) and prove the following proposition:
\begin{restatable}[]{proposition}{GXcommutation}
    \label{proposition:GX_commutation}
    Consider a gate \(G \in \mathcal{Z}^{(k)}\) and a product of single-qubit Pauli \(X\) operators that we denote by \(X_{\boldsymbol{s}}\) (where \(\boldsymbol{s} \in [n]\) indicates the indices of the qubits where \(X_{\boldsymbol{s}}\) acts non-trivially).
    Define \(G'\) in the following way,
    \begin{equation}
        G' \coloneqq X_{\boldsymbol{s}} G X_{\boldsymbol{s}} G^\dagger.
    \end{equation}
    Then \(G' \in \mathcal{Z}^{(k - 1)}\) if \(k > 1\) and \(G' = \pm \mathbb{I}\) if \(k \in \left\{ 0, 1 \right\}\).
    As a corollary, we also have that
    \begin{equation}
        G X_{\boldsymbol{s}} = X_{\boldsymbol{s}} G' G.
    \end{equation}
\end{restatable}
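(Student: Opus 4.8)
The plan is to pass to the diagonal phase-function representation of $\mathcal{Z}^{(k)}$. Every $G\in\mathcal{Z}^{(k)}$ is diagonal in the computational basis, and since it is a product of $\pm\mathbb{I}$, single-qubit $Z$ operators, and controlled-$Z$ operators acting on at most $k$ qubits, it acts as $G\ket{\boldsymbol v}=(-1)^{f_G(\boldsymbol v)}\ket{\boldsymbol v}$ for some multilinear polynomial $f_G\colon\mathbb{F}_2^n\to\mathbb{F}_2$ of degree at most $k$: a controlled-$Z$ supported on a qubit set $T$ contributes the monomial $\prod_{i\in T}v_i$ of degree $\lvert T\rvert\le k$, the factor $-\mathbb{I}$ contributes the constant term, and the group law of $\mathcal{Z}^{(k)}$ becomes addition of these polynomials mod $2$; conversely every multilinear polynomial of degree at most $k$ arises this way. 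This dictionary between $\mathcal{Z}^{(k)}$ and phase polynomials of degree $\le k$ is exactly what is set up in \Cref{app:Zk_hierarchy}, and I take it as given.

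With it, the computation of $G'$ is immediate. Since $X_{\boldsymbol s}\ket{\boldsymbol v}=\ket{\boldsymbol v\oplus\boldsymbol s}$, the operator $G'=X_{\boldsymbol s}GX_{\boldsymbol s}G^\dagger$ is again a diagonal $\pm 1$ matrix, $G'\ket{\boldsymbol v}=(-1)^{f_{G'}(\boldsymbol v)}\ket{\boldsymbol v}$, with
\begin{equation}
    f_{G'}(\boldsymbol v)=f_G(\boldsymbol v\oplus\boldsymbol s)+f_G(\boldsymbol v)\pmod 2 .
\end{equation}
So the proposition reduces to a combinatorial fact: if $f$ is multilinear of degree at most $d$ then $\boldsymbol v\mapsto f(\boldsymbol v\oplus\boldsymbol s)+f(\boldsymbol v)$ is multilinear of degree at most $d-1$. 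Granting this, $f_{G'}$ has degree at most $k-1$, so $G'\in\mathcal{Z}^{(k-1)}$ when $k>1$; and when $k\in\{0,1\}$ the degree bound forces $f_{G'}$ to be a constant, i.e.\ $G'=\pm\mathbb{I}$.

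To establish the combinatorial fact I would first handle a single-qubit shift $\boldsymbol s=\boldsymbol e_a$ on a monomial $\prod_{i\in T}v_i$: if $a\notin T$ the difference vanishes; if $a\in T$, writing the monomial as $v_a\prod_{i\in T\setminus a}v_i$ and using $v_a\oplus 1=1+v_a$ over $\mathbb{F}_2$, the difference collapses to $\prod_{i\in T\setminus a}v_i$, of degree $\lvert T\rvert-1$. By $\mathbb{F}_2$-linearity, the discrete derivative $\Delta_a f(\boldsymbol w)\coloneqq f(\boldsymbol w\oplus\boldsymbol e_a)+f(\boldsymbol w)$ thus sends multilinear polynomials of degree $\le d$ to multilinear polynomials of degree $\le d-1$. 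For a general $\boldsymbol s$ with support $\{a_1,\dots,a_m\}$ one cannot simply add the $\Delta_{a_j}$, so I would telescope over the bits: writing $\boldsymbol\sigma_0=\boldsymbol 0$ and $\boldsymbol\sigma_j=\boldsymbol e_{a_1}\oplus\cdots\oplus\boldsymbol e_{a_j}$,
\begin{equation}
    f(\boldsymbol v\oplus\boldsymbol s)+f(\boldsymbol v)=\sum_{j=1}^{m}\big(\Delta_{a_j}f\big)(\boldsymbol v\oplus\boldsymbol\sigma_{j-1}) .
\end{equation}
Each summand is a degree-$\le d-1$ multilinear polynomial evaluated with its argument shifted by a constant vector, and such a shift preserves the multilinear-degree bound (expand $\prod_{i\in T}(v_i\oplus c_i)$ term by term), so the whole sum is multilinear of degree $\le d-1$. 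Finally, the corollary $GX_{\boldsymbol s}=X_{\boldsymbol s}G'G$ needs no appeal to the claim at all: $X_{\boldsymbol s}G'G=X_{\boldsymbol s}(X_{\boldsymbol s}GX_{\boldsymbol s}G^\dagger)G=GX_{\boldsymbol s}$, using $X_{\boldsymbol s}^2=\mathbb{I}$ and $G^\dagger G=\mathbb{I}$.

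The only substantive step in the whole argument is the degree-lowering for a \emph{multi}-qubit shift: the telescoping identity, together with the observation that constant shifts of the argument do not raise the degree, is the crux, while everything else is the translation between $\mathcal{Z}^{(k)}$ and phase polynomials (borrowed from \Cref{app:Zk_hierarchy}) and an elementary monomial calculation.
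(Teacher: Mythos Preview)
Your argument is correct, but it takes a genuinely different route from the paper. The paper proceeds by induction on \(k\) at the operator level: it first reduces \(X_{\boldsymbol s}\) to a single \(X_i\) by iteratively commuting one factor at a time through \(G\) (using the inductive hypothesis to control the lower-order debris), then expands \(G\) as a product of generators \(G_\ell\in\{\pm\mathbb I,\,Z,\,C^{j-1}Z\}\) and checks by direct action on basis states that \(X_iG_\ell X_iG_\ell\) is a controlled-\(Z\) with one fewer control. You instead pass to the phase-polynomial picture \(G\ket{\boldsymbol v}=(-1)^{f_G(\boldsymbol v)}\ket{\boldsymbol v}\) with \(\deg f_G\le k\), identify \(G'\) with the discrete derivative \(f_G(\boldsymbol v\oplus\boldsymbol s)+f_G(\boldsymbol v)\), and reduce everything to the standard Boolean-function fact that derivatives lower multilinear degree, handled by telescoping over the bits of \(\boldsymbol s\). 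Your approach is cleaner and more conceptual, tying \(\mathcal Z^{(k)}\) directly to the Reed--Muller-style grading on \(\mathbb F_2[v_1,\dots,v_n]\); the paper's approach stays closer to the circuit picture and avoids introducing the polynomial language.

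One small correction: \Cref{app:Zk_hierarchy} does \emph{not} actually set up the phase-polynomial dictionary you cite; it works entirely with operators and proves the proposition by the inductive argument sketched above. The dictionary you need is, however, immediate from the definition of \(\mathcal Z^{(k)}\) (each \(C^{j-1}Z\) on support \(T\) contributes the monomial \(\prod_{i\in T}v_i\), and composition is addition mod \(2\)), so this is a misattribution rather than a gap.
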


Diagonal unitaries commute, and the elements of \(\mathcal{Z}^{(k)}\) are all self-inverse.
As a result, we can specify a \(U \in \mathcal{Z}^{(k)}\) using exactly
\begin{equation}
    \sum_{j=0}^k \binom{n}{j} = \bigo{n^k}
\end{equation}
bits,  
one to specify the sign and one to specify the presence or absence of each possible \(C^{j-1}Z\) gate for each \(j \in [1..n]\).
A \(C^{j-1}Z\) gate can be implemented using \(\bigo{j}\) \(T\) gates in depth \(\bigo{\log j}\)~\cite{Motzoi2017-rd}.
An arbitrary gate \(G \in \mathcal{Z}^{(k)}\) can therefore be implemented in depth \( \bigot{n^{k-1}}\) and gate complexity \(\bigot{n^k}\), even under reasonable assumptions about qubit connectivity~\cite{O-Gorman2019-dj}.
Furthermore, by counting the number of distinct elements of \(\mathcal{Z}^{(k)}\), we can also see that a typical element must have a circuit complexity lower bounded by \(\bigomega{n^k}\).

\begin{figure}[tbp]
\centering
    \begin{quantikz}
        [font=\footnotesize, column sep = 3.4mm, execute at end picture={
        \node (A)[fit=(\tikzcdmatrixname-1-10)(\tikzcdmatrixname-1-11)(\tikzcdmatrixname-4-10)(\tikzcdmatrixname-4-11), inner sep=4pt, rounded corners] {};
        \draw [->, dotted, thick,-{Latex[round]}] (A.south) -- ++(0,-.4);
        \node (B)[fit=(\tikzcdmatrixname-1-12)(\tikzcdmatrixname-1-13)(\tikzcdmatrixname-4-12)(\tikzcdmatrixname-4-13), inner sep=4pt, rounded corners] {};
        \draw [->, dotted, thick,-{Latex[round]}] (B.south) -- ++(0,-.4);
        }]
        \lstick{$\ket{\psi}$}                                                & \ctrl{1}   & \targ{}                            & \qw       & \qw      & \qw      & \qw      & \qw \arrow[rr, thick,-{Latex[round]}, dotted, shorten >=7pt]     &     & \gategroup[wires=4,steps=2,style={
        dashed,rounded corners,fill=blue!20, inner sep=0pt},background]{{A}} & \meterD{Z} & \gategroup[wires=4,steps=2,style={
        dashed,rounded corners,fill=red!20, inner sep=0pt},background]{{B}}  & \meterD{X}
        \\
        \lstick{$\ket{0}$}                                                   & \targ{}    & \qw                                & \targ{}   & \qw      & \qw      & \qw      & \qw \arrow[rr, thick,-{Latex[round]}, dotted, shorten >=7pt]     &     &                                    & \meterD{X} &     & \meterD{Z}
        \\ [2pt]
        \lstick{$\ket{+}$}                                                   & \qw        & \ctrl{-2}                          & \qw       & \gate{U_A}     & \ctrl{2} & \qw      & \qw   \arrow[rr, thick,-{Latex[round]}, dotted, shorten >=7pt]   &     &                                    & \meterD{X} &     & \meterD{Z}
        \\
        \lstick{$\ket{+}$}                                                   & \qw        & \qw                                & \ctrl{-2} & \gate{U_B} & \qw      & \ctrl{1} & \qw     \arrow[rr, thick,-{Latex[round]}, dotted, shorten >=7pt] &     &                                    & \meterD{Z} &     & \meterD{X}
        \\
        \lstick{$\ket{0}$}                                                   & \qw        & \qw                                & \qw       & \qw      & \targ{}  & \targ{}  & \qw                                                              & \qw & \qw                                & \qw        & \qw & \qw \rstick{\({\color{Blue3}
        P^{(1)} U_A P^{(2)}\ket{\psi}} \big/ {\color{Red3} P^{(1)} U_B P^{(2)} \ket{\psi}}\)}
    \end{quantikz}
    \caption{A circuit diagram for the one-qubit version of selective gate teleportation~\cite{Fowler2012-ti}. This protocol allows for the teleportation of a choice of unitaries, \(U_A\) or \(U_B\), onto an input state. Which unitary is teleported is controlled by the measurement settings (the four ancilla qubits are each measured in the \(X\) or \(Z\) basis according to the proscriptions shown in the blue and red shaded areas of the diagram). 
    The possible states of the output qubit are color-coded to match the measurement settings that select for them.
    In our use of selective teleportation, we take \(U_A = U\) and \(U_B = \mathbb{I}\).
    Byproduct operators \(P^{(1)}\) and \(P^{(2)}\) from the set \(\left\{ \mathbb{I}, X, Z, XZ \right\}\) are randomly applied before and after the selected unitary based on the measurement outcomes.
    }
    \label{fig:selective_gate_teleportation}
\end{figure}
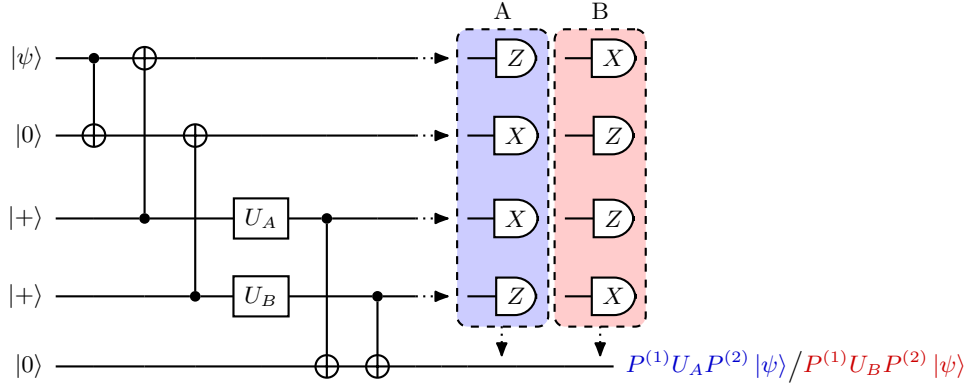

We begin our construction by preparing the usual \(2n\) qubit resource state for applying the gate \(U \in \mathcal{Z}^{(k)}\) using teleportation.
If this state were used directly for gate teleportation, we would need to perform a correction of the form \(U P^\dagger U^\dagger\) for some \(n\)-qubit byproduct operator \(P\) (which we can write as a product of single-qubit \(X\) and \(Z\) operators).
We will perform this correction using selective teleportation.
Note that we can neglect the \(Z\) corrections (as they can be trivially commuted to the end of the circuit up to a sign).
Factorizing the \(X\) component of the corrections, we see that we need to apply the unitary
\begin{equation}
    R = \prod_{i=1}^{n} U X_i^{x_i} U^\dagger,
\end{equation}
where the bits \(x_i\) will be chosen based on the measurement outcomes of first gate teleportation step.
It is convenient to rewrite each of the terms in the product as
\begin{equation}
    \label{eq:diagonal_factored_correction_operator}
  U X_i^{x_i} U^\dagger = X_i^{x_i} \left( X_i^{x_i} U X_i^{x_i} U^\dagger \right),
\end{equation}
i.e., a product of \(X_i^{x_i}\) and an operator that is in \(\mathcal{Z}^{(k-1)}\) by \Cref{proposition:GX_commutation}.

We can use selective gate teleportation to apply the diagonal term (\( X_i^{x_i} U X_i^{x_i} U^\dagger\)) from each of the \(n\) possible factors of the correction operator.
Note that we can do this after applying \(U\) to the \(n\) bell pairs and before performing the bell basis measurement that completes the gate teleportation.
We ignore the \(X_i^{x_i}\) terms that precede the diagonal components of the factors of the correction operator in \Cref{eq:diagonal_factored_correction_operator} because we can absorb them into the byproduct operators that will arise anyway from the selective teleportation.
For each of the correction operators, we need \(4n\) additional qubits to implement the selective gate teleportation, so the overall overhead is \(4n^2\).
When we attempt to use selective teleportation in this way to implement the correction operator, we will actually end up implementing the operator
\begin{equation}
    \tilde{R} = P^{(0)} \prod_{i=1}^{n} \left( X_i^{x_i} U X_i^{x_i} U^\dagger P^{(i)}\right),
\end{equation}
where the \(P^{(i)}\) terms represent randomly obtained products of Pauli operators and the \(X_i^{x_i} U X_i^{x_i} U^\dagger\) are elements of \(\mathcal{Z}^{(k-1)} \).
Notice that we can commute the Pauli terms to the left at the cost of requiring a series of corrections \(R^{(i)'} \in \mathcal{Z}^{(k-2)}\).

We can proceed recursively.
We factored the one byproduct operator to obtain \(n\) possible factors of the correction operator, each of which we applied using selective gate teleportation.
Implementing these corrections required a total of \(4n^2\) additional ancilla qubits and resulted in the addition of Pauli byproduct operators at \(n + 1\) locations.
We can commute these byproduct operators through to the left, starting at the righthand side of our expression. 
Each time we commute an \(n\)-qubit operator of the form \(\prod_{i=1}^n X_i^{x_i}\) through a diagonal gate we do so by factorizing it and we pick up \(n\) possible correction terms one level lower in the \(\mathcal{Z}^{(k)}\) hierarchy.
The number of corrections that we must apply, and the number of additional ancilla qubits that we require, therefore increases by a factor of \(n\) each time we descend the hierarchy by a level.
For example, we can use \(\bigo{n^3}\) ancilla qubits to implement each of the \(n^2\) possible second-order corrections using selective teleportation, leaving only corrections that are three or more levels down the hierarchy.
More generally, to implement \(U \in \mathcal{Z}^{(k)} \) up to a correction \(R \in \mathcal{Z}^{(k-a)}\) (and some Pauli \(X\) operators), we require a resource state on \(\bigo{n^a}\) qubits.

If we were to descend the hierarchy all the way to the point where the only remaining corrections were Pauli corrections (\(a = k - 1\)), we would obtain only a modest compression in circuit complexity (compared with directly applying \(U\)).
This is because, although the circuit depth would be merely \(\bigo{k}\), we would require \(\bigo{n^{k-1}}\) qubits.
However, consider what happens when we stop at the level \(a = \lfloor k/2 \rfloor \).
To simplify the presentation we assume that \(k\) is even.
We can use a resource state on \(\bigo{n^{k/2}}\) qubits to implement \(U\) up to a correction \(R \in \mathcal{Z}^{(k/2)}\) (and some additional Pauli terms) in \(k/2\) rounds of measurement.
We can implement the remaining correction directly with a gate complexity of \(\bigot{kn^{k/2}}\) in depth \(\bigot{k}\) with no additional space overhead using the constant depth fanout and unfanout circuits of \citen{Pham2012-rb}.
Therefore, the overall gate complexity of implementing an arbitrary \(U \in \mathcal{Z}^{(k)}\) in the precomputation model (i.e., neglecting the cost of preparing the resource state) is \(\bigot{k n ^{k/2}}\).

Recall that a fanout operation takes an \(n\)-qubit state \(\ket{\psi}\) and performs the map
\begin{equation}
    \ket{\psi} = \sum_{i=1}^{2^n} c_i \ket{i} \rightarrow \sum_{i=1}^{2^n} c_i \ket{i}^{\otimes m}
\end{equation}
for some integer \(m > 1\), where the states in \(\left\{ \ket{i} \right\}\) are the computational basis states.
Unfanout reverses this mapping.
\citen{Pham2012-rb} explains how both of these operations can be implemented using constant depth quantum circuits and classical feedback.
We can parallelize the implementation of \(m\) diagonal unitaries by performing a fanout, applying each unitary to a separate fanned out copy of \(\ket{\psi}\), and then performing an unfanout.

We can take advantage of this capability by partitioning the individual terms that make up an arbitrary \(R \in \mathcal{Z}^{(k/2)}\) into \(\bigo{k}\) sets of gates, where each set contains only terms that act on disjoint qubits.
By setting \(m = n^{k/2 - 1}\), we can apply the terms from each of the sets in parallel.
We can therefore apply all of the terms with the desired gate complexity and depth. Because the fanout and unfanout operations are constant depth, they do not increase the asymptotic scaling of the gate complexity.
The remaining Pauli correction can then be applied to complete the implementation of \(U\).

Now let us consider the classical computational cost associated with applying \(U\) this way in the precomputation model.
Applying \(U\) up to a correction at level \(\mathcal{Z}^{k-a}\) is trivial for \(a=1\).
For \(a=2\), we apply some subset of the \(n\) possible corrections that corresponds directly to the bits we obtained from the first set of measurements.
For \(a=3\), we need to repeatedly \(\text{XOR}\) one \(n\) bit string into another \(\bigo{n}\) times in order to determine the measurement settings, using \(\bigo{n^2}\) classical operations.
This growth continues, and we find that we need to perform \(\bigo{n^{k/2-1}}\) classical operations to determine which corrections to perform at the level that leaves us with a final correction in \(\mathcal{Z}^{k/2}\).
Actually computing the final correction \(R \in \mathcal{Z}^{k/2}\) requires determining the \(\bigo{n^{k/2}}\) elements of \(\mathcal{Z}^{k/2}\) that arise from commuting the byproduct operators through and then taking their product, which ultimately takes \(\bigo{n^k}\) \(\text{XOR}\) operations.
The classical postprocessing involved in the fanout operation is negligible compared to these costs, so the overall classical complexity is \(\bigo{n^k}\).

We can also ask about the quantum and classical complexities of performing the precomputation step.
Neglecting the operations involved in setting up the teleportation and selective teleportation gadgets themselves since they contribute negligibly to the overall complexity, we can just consider the gate complexities of performing one operation from \(\mathcal{Z}^{(k)}\), \(n\) operations from \(\mathcal{Z}^{(k-1)}\), and so on, down to \(n^{k/2-1}\) operations at the level \(\mathcal{Z}^{(k/2 + 1)}\). 
The only clear way to apply these operations is to work serially (since the use of selective teleportation may prevent us from using fanout and unfanout operations to parallelize).
This means that, although we only require \(\bigot{k n^k}\) non-identity gates, our definition of gate complexity (which attempts to account for storage space by counting the single-qubit identity operation as a gate) implies that the overall gate complexity of the precomputation step is \(\bigot{k n^{3k/2}}\).
This may not be a fundamental requirement, and it is also true that most of the \(\bigo{n^{k/2}}\) qubits are not required at all until the very last portions of the precomputation step, so they could be used for other things in the meantime.
The classical complexity of the precomputation step arises from computing the various correction operators and is not substantially larger than would be expected from the need to generate some kind of classical description of the circuits involved anyway.

In many ways, the techniques of this section are a generalization of the simpler scheme for applying Clifford operators using gate teleportation that we presented in \Cref{sec:gate_teleportation_precomputation}.
In order to make a comparison easy, we summarize the various scalings of these two examples of quantum precomputation in \Cref{tab:my-cost-table}.
\begin{table}[htbp]
    \centering
    \footnotesize
    \begin{tabular}{@{}lll@{}}
    \toprule
     &
      Typical \(U \in \mathcal{C}^{(2)}\) (Sec.~\ref{sec:gate_teleportation_precomputation}) \;\; &
      Typical \(U \in \mathcal{Z}^{(k)}\) (Sec.~\ref{sec:selective_gate_teleportation_precomputation}) \;\; \\ \midrule
    Gate complexity, standard                              & \(\bigt{n^2}\)  & \(\bigtt{n^k}\)     \\
    Gate complexity, precomputation \;\;              & \(\bigo{n}\)    & \(\bigot{k n^{k/2}}\) \\
    Resource state size                                    & \(\bigo{n}\)    & \(\bigo{n^{k/2}}\) \\
    Gate complexity, preparing \(\ket{\Gamma(U)}\)         & \(\bigo{n^2}\)  & \(\bigot{k n^{3k/2}}\)     \\
    Classical operations, consuming \(\ket{\Gamma(U)}\)\;\; & \(\bigo{n^2}\) & \(\bigo{n^k}\)     \\ \bottomrule
    \end{tabular}
    \caption{A summary of the scalings for applying arbitrary Clifford operators using gate teleportation (\Cref{sec:gate_teleportation_precomputation}) and arbitrary elements of \(\mathcal{Z}^{k}\) (products of \(Z\) and controlled \(Z\) operators with up to \(k-1\) controls) using selective gate teleportation (\Cref{sec:selective_gate_teleportation_precomputation}). For simplicity we assume that \(k\) is even. For the gate complexity, we count the number of one- and two-qubit gates from the Clifford + \(T\) gate set (counting single-qubit identity operations as gates). The quoted gate complexity in the precomputation model includes only those quantum operations required to consume the resource state \(\ket{\Gamma(U)}\). The (quantum) cost of preparing the resource state is provided separately, as is the number of classical operations required to consume the resource state to apply \(U\).}
    \label{tab:my-cost-table}
    \end{table}

\section{Discussion}
\label{sec:discussion}

In this paper, we introduced a new cost model for quantum computation that allows for ``quantum precomputation.''
This model is motivated by practical scenarios where it is highly valuable to perform a time-sensitive computation as quickly as possible, and where some portion of the problem's input is naturally known ahead of time.
In the precomputation cost model, we allow a reasonable (polynomial in the input size) amount of effort to be spent ``for free'' preparing a resource state before the input is fully specified.
The cost of an algorithm in the precomputation cost model is determined solely by the resources required to implement the algorithm given access to the resource state.
We presented three realizations of quantum precomputation that require asymptotically fewer resources in the precomputation cost model than in a standard one.

The first realization uses density matrix exponentiation to implement reflections about a state by consuming copies of that state.
We explained how, in some cases, this type of quantum precomputation can offer an exponential advantage (in the sense that the complexity required to execute an algorithm by consuming the resource state can be exponentially smaller than the complexity required to execute an algorithm directly).
As a particular example, we considered the task of accelerating quantum algorithms for linear systems in cases where it is natural to prepare copies of the state \(\ket{b}\) ahead of time.
In the future, we hope to find practical examples where this type of precomputation is useful, either for solving particular linear systems of equations, or for executing some other quantum algorithm whose cost might be dominated by the cost of implementing low-rank reflections.
In practice, the advantage need not be exponential to be useful.
It would be especially interesting if we could find situations where the ability to accelerate an algorithm using precomputation was the deciding factor that made it worth solving a particular problem using quantum rather than classical computation.

As a second example, we pointed out that standard techniques for implementing Clifford unitaries using gate teleportation constitute a simple illustration of an asymptotic advantage in the precomputation cost model.
These techniques allow for unitaries with a gate complexity of \(\bigt{n^2}\) to be implemented in \(\bigo{1}\) (quantum gate) depth by consuming a state on \(2n\) qubits.
This example highlights the importance of choosing an appropriate notion of cost when defining a model of quantum precomputation.
Under a definition of cost that treated Clifford operations as free, there could be no value in using precomputation to apply a Clifford unitary more efficiently.
However, as schemes for magic state distillation continue to improve, it is becoming less clear if quantifying the cost of a fault-tolerant quantum algorithm solely in terms of the number of non-Clifford gates is an accurate approximation~\cite{Litinski2019-ek}.
This motivated our particular definition of a precomputation cost model (that counts gate complexity, including Clifford gates), but it is possible that a metric of cost even closer to the hardware might be more appropriate.
For instance, one could imagine squeezing some additional benefit out of a scheme for quantum precomputation by preparing the resource states using shorter distance error correcting codes (and therefore, fewer physical qubits and less actual time) in conjunction with error detection and postselection.

Even within the particular cost model we have defined, there are many degrees of freedom to explore in defining precomputation protocols.
For example, the technique we used to implement an arbitrary Clifford unitary could be modified to apply an \(n\)-qubit circuit \(U\) that interleaved Clifford operations with a small number \((t)\) of \(T\) gates.
Such a modified scheme could use a combination of gate teleportation and selective teleportation to apply the Clifford gates as normal, while selectively implement the possible corrections after each \(T\) gate. 
This would require an \(\bigo{n + t}\) qubit resource state that would be consumed in \(\bigo{t}\) rounds of measurement to apply \(U\) up to a final Pauli correction.

The most novel example of precomputation that we proposed in this paper uses selective teleportation to achieve a quadratic reduction in the complexity of implementing a family of diagonal unitaries from the Clifford hierarchy (when comparing the cost in the precomputation model with the standard cost).
Our scheme is likely generalizable to all diagonal unitaries that are members of the Clifford hierarchy, but this is still a relatively restricted class of unitaries.
This naturally raises the question, are there ways to compile existing algorithms such that they would make heavy use of the kinds of diagonal unitaries that we have shown can be accelerated by precomputation?
Diagonal unitaries appear in a variety of places, oftentimes as a natural way of encoding the output of a classical function into a phase.
For example, the Forrelation problem~\cite{Aaronson2015-zu}, IQP circuits~\cite{Shepherd2009-rw}, QAOA~\cite{Farhi2014-vk}, and Grover's algorithm itself~\cite{Grover1996-ah}, can all be formulated to involve heavy use of diagonal unitaries.
In the future, we hope that extensions of our precomputation protocols can be used to accelerate some such algorithms for interesting and time-sensitive applications.

More broadly, does quantum precomputation have anything to teach us about the nature or power of quantum computation?
The power of advice (computation supplemented by a resource state) has been studied both in classical and quantum contexts~\cite{Karp1980-hr, Aaronson2004-xr}, but, as we discuss in \Cref{sec:precomputation}, the precomputation model we introduced differs from these prior works in that we require that the extra resource state be efficient to prepare.
In this finer-grained setting, what can we say about the difference between quantum and classical computation?
Are there classical analogues of the kinds of quantum precomputation that we have proposed, or are there some types of precomputation are uniquely quantum mechanical?
Conversely, classical precomputation is widely applicable in situations where the precomputed information is used multiple times.
One could interpret recent shadow tomography proposals as examples of quantum precomputation that allow for information reuse~\cite{Aaronson2019-mk, Chen2022-ys}, and it would be interesting to see if techniques from that domain can be adapted to enable such reuse in the context of other types of quantum precomputation.

Finally, are there other, perhaps more general, classes of quantum computation that we can accelerate in the precomputation cost model?
Many proposed applications of quantum machine learning techniques to classical data rely on quantum random access memory (QRAM) to obtain a computational advantage~\cite{Biamonte2017-rm}.
Are there real-world applications where it would be natural to circumvent the need for QRAM by encoding some classical data into quantum states ahead of time?

\section*{Acknowledgements}
The authors thank Rolando Somma, Robin Kothari, Nathan Wiebe, Craig Gidney, Yigit Subasi, Ryan Babbush, 
and others
for helpful discussions and feedback.


\appendix

\section{Precomputation and quantum advice}
\label{app:quantum_advice}

The purpose of this appendix is to relate our proposed model of quantum precomputation to the notion of quantum advice and the complexity class BQP/qpoly.
We do not aim to provide a self-contained introduction to quantum complexity theory, but we will briefly mention some basic definitions that will aid in making the comparison.
The most well-studied computational problems in complexity theory are decision problems, questions that have a yes or no answer.
We can formalize a decision problem as a language, a set of bitstrings that encode the inputs to the problem for which the answer is yes.
Informally, a decision problem is in the complexity class BQP if it can be solved in polynomial time on a quantum computer.
Formally, we have the following definition:
\begin{definition}
    Let \(\left\{0, 1\right\}^*\) denote the set of all binary strings.
    A language \(L \subseteq \left\{ 0, 1 \right\}^*\) is in BQP if these exists a uniform family of polynomial-size quantum circuits, \(\left\{ C_n \right\}\), such that the following conditions hold for all \(x \in \left\{ 0, 1 \right\}^n\):
    \begin{enumerate}
        \item If \(x \in L\), then the probability that the first qubit is measured to be \(\ket{1}\) after \(C_n\) is applied to the input \(\ket{x}\otimes\ket{0\cdots0}\) is at least \(2/3\).
        \item If \(x \notin L\), then the probability that the first qubit is measured to be \(\ket{1}\) after \(C_n\) is applied to the input \(\ket{x}\otimes\ket{0\cdots0}\) is at most \(1/3\).
    \end{enumerate}
\end{definition}
Note that the circuit \(C_n\) depends only on \(n\), the size of the input. The condition that the family of circuits is uniform essentially requires that a polynomial time classical computer can generate the description of the circuit that the quantum computer will execute.

Like our model of quantum precomputation, the complexity class BQP/qpoly is intended to capture the power of a polynomial-time quantum machine augmented with an additional resource state.
Formally, the class can be defined as follows:
\begin{definition}
    A language \(L \subseteq \left\{ 0, 1 \right\}^*\) is in BQP/qpoly if there exists a uniform family of polynomial-size quantum circuits, \(\left\{ C_n \right\}\), and a family of polynomial-size quantum states, \(\left\{ \ket{\psi_n} \right\}\), such that the following conditions hold for all \(x \in \left\{ 0, 1 \right\}^n\):
    \begin{enumerate}
        \item If \(x \in L\), then the probability that the first qubit is measured to be \(\ket{1}\) after \(C_n\) is applied to the input \(\ket{x}\otimes\ket{0\cdots0}\otimes{\ket{\psi_n}}\) is at least \(2/3\).
        \item If \(x \notin L\), then the probability that the first qubit is measured to be \(\ket{1}\) after \(C_n\) is applied to the input \(\ket{x}\otimes\ket{0\cdots0}\otimes \ket{\psi_n}\) is at most \(1/3\).
    \end{enumerate}
\end{definition}
It is important to note that the additional quantum resources afforded to the polynomially powerful quantum machine can be arbitrarily complex states on \(\text{poly}(n)\) qubits. However, these states are only allowed to depend on the size of the input.

There are therefore three key differences between the model of computation considered in BQP/qpoly and the model we consider when we allow for ``free'' polynomial-time quantum precomputation.
First of all, we have defined quantum precomputation to allow inputs and outputs that are combinations of classical and quantum information. BQP/qpoly is concerned with machines that take a classical bitstring as an input and return (with some probability of failure) a single classical bit as output.
Secondly, in the precomputation cost model, we require that the quantum resources states are preparable in polynomial time, whereas the quantum advice states allowed in BQP/qpoly can be arbitrary quantum states.
Finally, in the precomputation model, we partition the input into two subsets and allow for the resource state to depend on one subset, but not the other. The complexity class BQP/qpoly only allows for the resource states to depend on the size of the input, but none of its other features.

\section{Algorithmic Primitives}

\subsection{Density matrix exponentiation}
\label{app:denmat_review}

Density matrix exponentiation is a technique that allows one to consume copies of a mixed quantum state \(\rho\) in order to approximately implement the unitary \(e^{-it\rho}\)~\cite{Lloyd2014-td}.
In \citen{Lloyd2014-td}, Lloyd et al. gave a protocol for implementing \(e^{-it\rho}\) to within an error \(\epsilon\) (in the diamond norm) by consuming
\begin{equation}
    \label{eq:denmat_exp_scaling_app}
    m = \bigo{t^2/\epsilon}
\end{equation}
copies of \(\rho\).
This scaling is optimal with respect to \(\epsilon\), and optimal with respect to \(t\) for general \(\rho\) (but not necessarily for pure states)~\cite{Kimmel2017-af}.
Furthermore, the protocol is relatively simple to implement.
In order to act on an input state \(\sigma\), one repeatedly consumes a single copy of \(\rho\) to apply an approximation to \(e^{it\rho/m}\).
This is done by performing a partial swap operator (with a small angle) on the joint system \(\rho \otimes \sigma\) and discarding the first register.
The entire evolution can be performed using \(\mathcal{O}(n t^2 /\epsilon)\) one- and two-qubit gates~\cite{Kimmel2017-af}.

Density matrix exponentiation is a basic algorithmic primitive that has been applied in a variety of ways~\cite{Lloyd2014-td, Marvian2016-ya, Gilyen2022-gu}.
In the original paper, \citen{Lloyd2014-td}, it was used as a building block in the quantum principle component analysis algorithm.
Quantum principle component analysis allows one to (approximately) sample the eigenvectors of \(\rho\) corresponding to large eigenvalues exponentially more quickly than any classical algorithm that has access only to single copies of \(\rho\)~\cite{Huang2022-jp, Cotler2021-ea}.
In \citen{Marvian2016-ya}, density matrix exponentiation was used to efficiently emulate the action of a unitary \(U\) on a small subspace by consuming samples of the form \(\ket{b} \otimes U\ket{b}\), where the input states \(\ket{b}\) span the subspace.
This type of application closely resembles a sort of quantum lookup table, and shares some features with our proposed use of density matrix exponentiation for precomputation, although the aim of that work is different.

\subsection{Gate teleportation and the Clifford hierarchy}
\label{app:review_gate_teleportation_clifford}

Our work makes heavy use of the concept of gate teleportation~\cite{Gottesman1999-gr}.
We illustrated the single-qubit version of gate teleportation in \Cref{fig:generic_teleportation} in the main text, but we present a more detailed review here.
Given a unitary \(U\), gate teleportation allows us to prepare a resource state \(\Gamma(U)\) that we can later consume to apply \(U P\) to an arbitrary state \(\ket{\psi}\), where the ``byproduct operator'' \(P\) is an element of the Pauli group randomly determined by the measurement outcomes of the teleportation protocol.
The state obtained when using gate teleportation to apply \(U\) (actually \(UP\)) to \(\ket{\psi}\) can be written as \(\left( U P U^\dagger\right) U \ket{\psi}\). Multiplying by \(U P^\dagger U^\dagger\) yields \(U \ket{\psi}\).

Gate teleportation can be especially useful when \(U P^\dagger U^\dagger\) is simpler to apply than \(U\) itself.
This is the case in the canonical application of gate teleportation, implementing \(T\) gates in a quantum error correcting code that supports fault-tolerant Clifford gates~\cite{Bravyi2005-vi}.
The problem of applying \(T\) gates without error is reduced to the problem of preparing high-fidelity ``magic states,'' because, for all possible byproduct operators \(P\), \(T P T^\dagger\) is a Clifford gate despite the fact that \(T\) is not.\footnote{
In practice, \(T\) gates can actually be implemented using a simpler and more specialized form of gate teleportation known as one-bit teleportation~\cite{Zhou2000-yi}, but for our purposes we can ignore this detail.}
Just as state teleportation trivially generalizes to multiple qubits, gate teleportation can likewise be straightforwardly applied to multiple qubits.
In the \(n\)-qubit case, the byproduct operator is an \(n\)-qubit Pauli operator (up to a phase) that depends on the \(2n\)-bit measurement outcome obtained from \(n\) simultaneous bell basis measurements.

The notion that gate teleportation is most useful when \(U P^\dagger U^\dagger\) is easier to implement than \(U\) itself led Gottesman and Chuang to define an infinite hierarchy of unitaries now known as the Clifford hierarchy~\cite{Gottesman1999-gr}.
The first level of the Clifford hierarchy, which we denote by \(\mathcal{C}^{(1)}\), is defined to be the Pauli group.
The \(k\)th level of the Clifford hierarchy is defined inductively,
\begin{equation}
    \label{eq:clifford_hierarchy_induction}
    \mathcal{C}^{(k)} \coloneqq \left\{ U | U P U^\dagger \in \mathcal{C}^{(k-1)} \; \forall P \in \mathcal{C}^{(1)} \right\}.
\end{equation}
The second level of the hierarchy is therefore the usual Clifford group.
The higher levels of the Clifford hierarchy are harder to characterize in familiar terms, but we can give some examples.
For instance, \(T\) gates, Toffoli gates, and \(CCZ\) gates belong to \(\mathcal{C}^{(3)}\).
More generally, multi-controlled \(C^{k-1}NOT\) and \(C^{k-1}Z\) gates are in \(\mathcal{C}^{(k)}\), as are the single-qubit rotations \(Z_k\), 
\begin{equation}
    Z_k \defeq
    \begin{bmatrix}
        1 & 0
        \\
        0 & e^{i\pi 2^{-k + 1}}
    \end{bmatrix}.
\end{equation}
It is an open problem to fully characterize the higher levels of the hierarchy, although the diagonal elements are well-understood algebraically in terms of polynomials and roots of unity~\cite{Cui2017-mu}.

\subsection{Review of selective teleportation}
\label{app:selective_teleportation}

When gate teleportation is used to implement a unitary \(U\) that is not in the Clifford group, the resulting correction operator \(U P^\dagger U^\dagger\) is not, in general, a Pauli operator.
For example, consider the use of gate teleportation to implement a \(T\) gate.
With probability \(\frac{1}{2}\), correcting for the byproduct operator requires the subsequent implementation of a phase gate (\(S \in \mathcal{C}^{(2)}\)).
Naively, this means that after applying a \(T\) gate using gate teleportation it is necessary to determine and apply the correction before performing additional Clifford gates.
However, in \citen{Fowler2012-ti}, Fowler showed how a generalization of quantum teleportation can be used to selectively implement this phase gate correction using a small number of ancilla qubits measured in a classically controlled choice of the \(X\) or \(Z\) basis.

\begin{figure}[t]
    \centering
    \begin{subfigure}{.53\textwidth}
        \begin{quantikz}
            [font=\footnotesize, column sep = 3.4mm, execute at end picture={
            \node (A)[fit=(\tikzcdmatrixname-1-7)(\tikzcdmatrixname-1-8)(\tikzcdmatrixname-2-7)(\tikzcdmatrixname-2-8), inner sep=4pt, rounded corners] {};
            \draw [->, dotted, thick,-{Latex[round]}] (A.south) -- ++(0,-.4);
            \node (B)[fit=(\tikzcdmatrixname-1-9)(\tikzcdmatrixname-1-10)(\tikzcdmatrixname-2-9)(\tikzcdmatrixname-2-10), inner sep=4pt, rounded corners] {};
            \draw [->, dotted, thick,-{Latex[round]}] (B.south) -- ++(0,-1.04);
            }]
            \lstick{$\ket{\psi}$}                                                & \ctrl{1}   & \targ{}                            & \qw       & \qw \arrow[rr, thick,-{Latex[round]}, dotted, shorten >=7pt] &     & \gategroup[wires=2,steps=2,style={
            dashed,rounded corners,fill=blue!20, inner sep=0pt},background]{{A}} & \meterD{Z} & \gategroup[wires=2,steps=2,style={
            dashed,rounded corners,fill=red!20, inner sep=0pt},background]{{B}}  & \meterD{X}
            \\
            \lstick{$\ket{0}$}                                                   & \targ{}    & \qw                                & \targ{}   & \qw \arrow[rr, thick,-{Latex[round]}, dotted, shorten >=7pt] &     &                                    & \meterD{X} &     & \meterD{Z}
            \\ [2pt]
            \lstick{$\ket{+}$}                                                   & \qw        & \ctrl{-2}                          & \qw       & \qw                                                          & \qw & \qw                                & \qw        & \qw & \qw \rstick{\( P \left( {\color{Blue3}\ket{\psi}} \big/ {\color{Red3} \ket{+}} \right)\)}
            \\
            \lstick{$\ket{+}$}                                                   & \qw        & \qw                                & \ctrl{-2} & \qw                                                          & \qw & \qw                                & \qw        & \qw & \qw \rstick{\( P \left( {\color{Blue3}\ket{+}} \big/ {\color{Red3} \ket{\psi}} \right)\)}
        \end{quantikz}
        \caption{Selective destination teleportation}
        \label{fig:selective_destination_teleportation}
    \end{subfigure}
    \begin{subfigure}{.46\textwidth}
        \begin{quantikz}
            [font=\footnotesize, column sep = 3.4mm, execute at end picture={
            \node (A)[fit=(\tikzcdmatrixname-1-6)(\tikzcdmatrixname-1-7)(\tikzcdmatrixname-2-6)(\tikzcdmatrixname-2-7), inner sep=4pt, rounded corners] {};
            \draw [->, dotted, thick,-{Latex[round]}] (A.south) -- ++(0,-.48);
            \node (B)[fit=(\tikzcdmatrixname-1-8)(\tikzcdmatrixname-1-9)(\tikzcdmatrixname-2-8)(\tikzcdmatrixname-2-9), inner sep=4pt, rounded corners] {};
            \draw [->, dotted, thick,-{Latex[round]}] (B.south) -- ++(0,-.48);
            }]
            \lstick{$\ket{\alpha}$}                                              & \ctrl{2}   & \qw                                & \qw \arrow[rr, thick,-{Latex[round]}, dotted, shorten >=7pt] &     & \gategroup[wires=2,steps=2,style={
            dashed,rounded corners,fill=blue!20, inner sep=0pt},background]{{A}} & \meterD{X} & \gategroup[wires=2,steps=2,style={
            dashed,rounded corners,fill=red!20, inner sep=0pt},background]{{B}}  & \meterD{Z}
            \\
            \lstick{$\ket{\beta}$}                                               & \qw        & \ctrl{1}                           & \qw\arrow[rr, thick,-{Latex[round]}, dotted, shorten >=7pt]  &     &                                    & \meterD{Z} &     & \meterD{X}
            \\ [2pt]
            \lstick{$\ket{0}$}                                                   & \targ{}    & \targ{}                            & \qw                                                          & \qw & \qw                                & \qw        & \qw & \qw \rstick{\( P \left( {\color{Blue3}\ket{\alpha}} \big/ {\color{Red3} \ket{\beta}} \right)\)}
        \end{quantikz}
        \caption{Selective source teleportation}
        \label{fig:selective_source_teleportation}
    \end{subfigure}
    \caption{Circuit diagrams for the one-qubit versions of selective destination and source teleportation~\cite{Fowler2012-ti}. Both protocols allow for a choice that is made by selecting between two measurement settings (indicated by the blue and red shaded areas of the diagrams). Selective destination teleportation teleports the state of one qubit to a choice of two different qubits. Selective source teleportation allows one to choose which of two qubits will have its state teleported to a fixed target.
    The possible states of the output qubit(s) are color-coded to match the measurement settings that select for them.
    In both cases, a byproduct operator \(P\) drawn from the set \(\left\{ \mathbb{I}, X, Z, XZ \right\}\) is randomly applied based on the measurement outcomes.}
    \label{fig:selective_teleportation}
\end{figure}
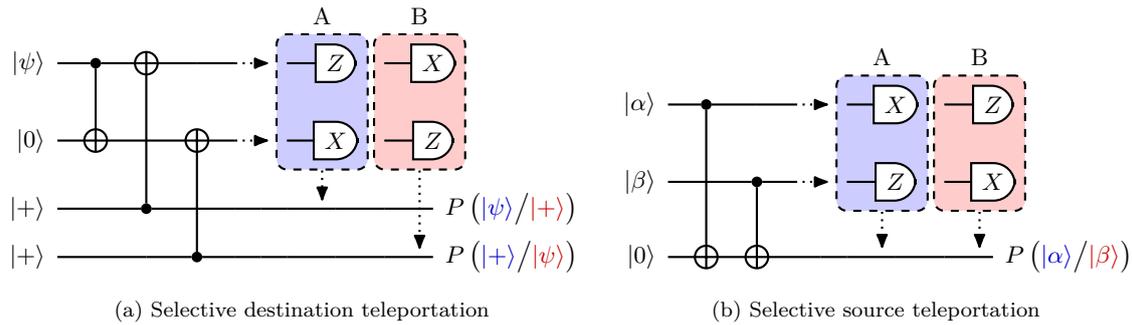

Fowler's selective teleportation relies on two related constructions, selective source teleportation and selective destination teleportation.
Selective destination teleportation allows one to teleport a single qubit's state to either one of two destination qubits.
Selective source teleportation allows for teleportation from a choice of two different source qubits to a fixed destination qubit.
Both types of selective teleportation are controlled by making an appropriate choice of measurement basis and both introduce a Pauli byproduct operator \(P \in \left\{ \mathbb{I}, X, Z, XZ \right\}\) that can be inferred from the (uniformly random) measurement outcomes.
We give circuit diagrams for the single-qubit versions of these primitives in \Cref{fig:selective_teleportation}. The multi-qubit versions are straightforward generalizations.

Together, selective source and destination teleportation can be used to implement a primitive that we refer to as selective gate teleportation.
We illustrated the single-qubit version of this selective gate teleportation in \Cref{fig:selective_gate_teleportation} in the main text.
Selective gate teleportation allows us to apply our choice of unitaries \(U_1\) or \(U_2\) to an unknown \(n\)-qubit state \(\ket{\psi}\) by choosing how to measure some set of \(4n\) ancilla qubits. As a special case, we can use selective gate teleportation to defer the choice of whether or not to apply a unitary \(U\) by taking \(U_1 = U\) and \(U_2 = \mathbb{I}\).
Selective gate teleportation randomly introduces the byproduct operators \(P^{(1)}\) and \(P^{(2)}\) (both \(n\)-qubit Pauli operators) before and after the location at which the choice of unitaries is to be applied.
For example, let \(s \in \left\{ 0, 1 \right\}\) denote the classical bit that determines whether or not to perform the teleportation that applies \(U\).
Rather than obtaining the desired \(U^s \ket{\psi}\), we instead obtain the state
    \(\ket{\phi} = P^{(1)} U^s P^{(2)} \ket{\psi}\).
To obtain \(U^s \ket{\psi}\), we would need to subsequently apply the correction operator \(U^{s} P^{(2)^\dagger} U^{s\dagger} P^{(1)^\dagger} \).

In the case that Fowler originally consider in \citen{Fowler2012-ti}, one first uses gate teleportation to implement a \(T\) gate (up to a possible \(S\) gate correction) and then selectively applies the \(S\) gate.
Because \(S\) is a Clifford gate, \(S^{s} P^{(2)^\dagger} S^{s\dagger} P^{(1)^\dagger} \) is a Pauli operator regardless of the choice of \(s\) or the measurement outcomes.
As a consequence, the measurements for both teleportation steps can be deferred or performed while applying additional Clifford gates and the necessary Pauli correction can be propagated through the resulting circuit afterwards.
This type of optimization has been used to create efficient surface code layouts for a variety of algorithmic primitives~\cite{Fowler2012-ti, Litinski2019-nu, Gidney2019-qi}.

\section{\texorpdfstring{The \(\mathcal{Z}^{(k)}\) hierarchy}{The controlled-Z hierarchy}}
\label{app:Zk_hierarchy}

In \Cref{sec:selective_gate_teleportation_precomputation}, we defined \(\mathcal{Z}^{(k)}\) to be the set of \(n\)-qubit unitaries generated by arbitrary products of controlled \(Z\) gates with up to \(k-1\) control qubits (including the case with \(0\) controls, \(Z\) gates themselves) and \(\pm \mathbb{I}\).
For convenience, we define \(\mathcal{Z}^{(0)} \coloneqq \left\{ \pm \mathbb{I} \right\}\).
Let \(\mathcal{D}^{(k)}\) denote the elements of the \(k\)-th level of the Clifford hierarchy that are also diagonal.
As sets, we have that \(\mathcal{Z}^{(k)} \subseteq \mathcal{D}^{(k)} \subset \mathcal{C}^{(k)}\).
While \(\mathcal{C}^{(k)}\) does not form a group for \(k > 2\), \citen{Cui2017-mu} showed that \(\mathcal{D}^{(k)}\) is a group for all \(k\).

The set \(\mathcal{Z}^{(k)}\) can also be shown to form a group under composition.
By definition, \(\mathcal{Z}^{(k)}\) is closed under composition (which is associative) and includes the identity element.
Because diagonal unitaries commute and \(C^{k}Z\) gates are self-inverse for all \(k\), we can see that each element of \(\mathcal{Z}^{(k)}\) is its own inverse.
Therefore, \(\mathcal{Z}^{(k)}\) is a group.

The following proposition will be useful:
\GXcommutation*
\begin{proof}
    We will prove this proposition by induction.
    The \(k=0\) case is clear by inspection and the \(k=1\) case follows from the fact that Pauli operators either commute or anti-commute.
    Now let us assume that the proposition is true for all \(j < k\) and prove that it must also hold for \(j = k\).
    Consider an arbitrary \(G \in \mathcal{Z}^{(k)}\) and \(\boldsymbol{s} \in [n]\).

    First of all, we can simplify the proof by considering a single Pauli \(X\) operator acting on arbitrary qubit \(i\) rather than the product \(X_{\boldsymbol{s}}\).
    This is because we can expand \(X_{\boldsymbol{s}} G X_{\boldsymbol{s}} G^\dagger\) as \(X_{\boldsymbol{s}} X_{\boldsymbol{s}_1} \; X_{\boldsymbol{s}_1} G X_{\boldsymbol{s}_1} G^\dagger \; X_{\boldsymbol{s}_2} \; X_{\boldsymbol{s}_2} G X_{\boldsymbol{s}_2} G^\dagger \cdots G^\dagger\) through repeated resolutions of the identity.
    If we can show that \(X_i G X_i G^\dagger \in \mathcal{Z}^{(k-1)}\) for all \(i\), then it would follow that
    \begin{equation}
        X_{\boldsymbol{s}} G X_{\boldsymbol{s}} G^\dagger =  X_{\boldsymbol{s}} X_{\boldsymbol{s}_1} G'_1 X_{\boldsymbol{s}_2} G'_2 \cdots
    \end{equation}
    for some set of \(\left\{ G'_1, G'_2, \cdots \right\} \subseteq \mathcal{Z}^{(k-1)}\).
    We could then use the inductive hypothesis to commute the various \(X\) operators through to the left, incurring additional terms from the \(\mathcal{Z}^{(j)}\) hierarchy with \(j < k\).
    These are all elements of \(\mathcal{Z}^{(k-1)}\), which is a group, and therefore their product is also in \(\mathcal{Z}^{(k-1)}\).
    The \(X\) terms would cancel, completing the proof.

    With that simplification established, the task that remains is to show that 
    \begin{equation}
        \label{eq:single_qubit_xg}
        X_i G X_i G^\dagger \in \mathcal{Z}^{(k-1)}
    \end{equation} for an arbitrary qubit \(i\).
    We can further simplify by expanding \(G\) as a product of \(m\) unitaries that are either \(\pm \idmat\), single-qubit \(Z\) gates, or \(C^jZ\) gates (for \(j < k\)),
    \begin{equation}
        G = \prod_{\ell=1}^m G_\ell.
    \end{equation}
    We will proceed by showing that \(G_\ell X_i = X_i G'_\ell G_\ell\) for some \(G'_\ell \in \mathcal{Z}^{(k-1)}\).
    If this statement holds, then we can commute \(X_i\) to the left through the each of the \(G_\ell\) terms that make up \(G\) in \Cref{eq:single_qubit_xg} and cancel it, picking up a collection of additional \(G'_\ell\) terms from \(\mathcal{Z}^{(k-1)}\).
    Because diagonal unitaries commute, we could also commute these additional terms to the left through the \(G_\ell\) terms, allowing \(G\) and \(G^\dagger\) to cancel and leaving us with a product of \(G'_\ell\) terms.
    Because \(\mathcal{Z}^{(k-1)}\) is a group, this product of \(G'_\ell\) terms would be in \(\mathcal{Z}^{(k-1)}\) and we would therefore be done.

    Now all that remains is to show that
    \begin{equation}
        \label{eq:last_statement}
        G_\ell X_i = X_i G'_\ell G_\ell
    \end{equation}    
    for some \(G'_\ell \in \mathcal{Z}^{(k-1)}\).
    First consider the case where \(G_\ell\) and \(X_i\) have support on disjoint qubits.
    Then we trivially have \(G_\ell X_i = X_i G_\ell\), which shows that the equality in \Cref{eq:last_statement} holds if we take \(G'_\ell = \idmat\).
    Now we address the case where \(X_i\) acts on one of the qubits that \(G_\ell\) also acts non-trivially on.
    Let \(\boldsymbol{x}\) denote the indices of the qubits where \(G_\ell\) acts non-trivially.

    Consider the action of the operator \(X_i G_\ell X_i G_\ell\) on an arbitrary state \(\ket{\psi}\).
    Applying \(G_\ell\) flips the sign of those computational basis states where the qubits index by \(\boldsymbol{x}\) are all in the \(1\) state.
    Applying \(X_i\) flips the state of the \(i\)th qubit.
    Applying \(G_\ell\) once again flips the sign of those basis states where the qubits index by \(\boldsymbol{x}\) are all in the \(1\) state.
    Applying \(X_i\) unflips the state of the \(i\)th qubit.
    The cumulative result of these operations is to flip the sign of those states index by the qubits in the set \(\boldsymbol{x}\setminus i\).
    In other words, \(X_i G_\ell X_i G_\ell\) acts as a controlled \(Z\) operator with one fewer controls than \(G_\ell\) (the control on qubit \(i\) is removed).
    Letting \(G'_\ell\) denote this new operator, we have that \(G'_\ell \in \mathcal{Z}^{(k-1)}\) by the definition of \(\mathcal{Z}^{(k-1)}\).
    We can multiply the expression \(G'_\ell = X_i G_\ell X_i G_\ell\) by \(X_i\) on the left and \(G_\ell\) on the right to obtain the desired result,
    \begin{equation}
        X_i G'_\ell G_\ell = G_\ell X_i.
    \end{equation}

    This completes the proof.
\end{proof}

\end{document}